\DeclareMathOperator{\tr}{tr}
\DeclareMathOperator{\diag}{diag}
\def \C{{\mathbb C}}
\newtheorem{pro}{Procedure}
\newtheorem{thm}{Theorem}
\newtheorem{ex}{Example}
\begin{document}

\title{A graph theoretical approach to states and unitary operations \thanks{This work  is supported by CSIR (Council  of Scientific and Industrial Research) Grant No.  25(0210)/13/EMR-II, New Delhi,  India.}}


\author{Supriyo Dutta \and Bibhas   Adhikari \and Subhashish  Banerjee}


\institute{Supriyo Dutta \at
              Department of Mathematics, IIT Jodhpur.\\
              \email{dutta.1@iitj.ac.in}           
           \and
           Bibhas   Adhikari \at
              Department of Mathematics, IIT Kharagpur.\\
              \email{bibhas@maths.iitkgp.ernet.in}
           \and
           Subhashish  Banerjee \at
           		Department of Physics, IIT Jodhpur.\\
           		\email{subhashish@iitj.ac.in}
}

\date{Received: date / Accepted: date}

\maketitle

\begin{abstract}
Building upon our previous work, on graphical representation of a quantum state by signless Laplacian matrix, we pose the following question. If a local unitary operation is applied to a quantum state, represented by a signless Laplacian matrix, what would be the corresponding graph and how does one implement local unitary transformations graphically? We answer this question by developing the notion of local unitary equivalent graphs. We illustrate our method by a few, well known, local unitary transformations implemented by single qubit Pauli and Hadamard gates. We also show how graph switching can be used to implement the action of the $C_{NOT}$ gate, resulting in a graphical description of Bell state generation.
\keywords{Signless Laplacian of a combinatorial graph \and Graph switching \and Pauli matrices \and Local unitary operators}
\end{abstract}

\section{Introduction}
\label{intro}
Humans are  fundamentally visual  beings.  Our  most intuitive
  senses are geometric and topological. Thus having a sense for graphs
  comes naturally  to us. On  the other extreme, perhaps  the subtlest
  concept in physics, farthest removed  from our sensory intuition, is
  the  nature of  quantum superposition,  which asserts  that particle
  properties  may  lack a  definiteness  or  realism in  a  profoundly
  fundamental sense.  Thus it is worthwhile asking how we may leverage
  the lucidity  of the  graphs to represent  quantum states.   A graph
  theoretical  approach  to  quantum  mechanics  would  also  help  to
  amalgamate visualization  offered by graphs with  the well developed
  mathematical  machinery  of  graph  theory.  This  motivated  us  to
  provide a  graphical representation of quantum  states in \cite{us},
  where we showed  that the concept of a signless  Laplacian matrix is
  more advantageous,  for graphical representation of  quantum states,
  than the combinatorial Laplacian matrix.

Here we build upon our  previous construction, of representing quantum
states  by  density  matrices  defined  by  using  signless  Laplacian
matrices associated  with weighted  graphs without multiple  edges and
with/without loops.   Our main  intention is to  establish a  proof of
principle,  and study  the graph  theoretic transformation  associated
with  the simplest  non-trivial operation  on $n$  qubits, namely  the
local unitary transformation of $n$ qubits. This would have direct
relevance to the field of quantum information [\cite{EPR} \dots \cite{NC}]
which has been brought to the realms of practical endeavours by mean of
spectacular experimental advances, such as in \cite{Haroche}, \cite{Wineland}.

If $G$ is a weighted graph with/without loops having real edge weights
and  nonnegative  loop weights,  the  density  matrix defined  by  $G$
is  $$\rho(G)=\frac{1}{\tr(L(G))}L(G)$$ where  $L(G)$ is  the signless
Laplacian matrix associated with the graph $G$, see \cite{us}. Some of
the well  known unitary transformations  are implemented by the Pauli
$X=\begin{bmatrix}0 &  1\\ 1 & 0\end{bmatrix}$,  $Y=\begin{bmatrix}0 &
-i\\  i   &  0\end{bmatrix}$  and   $Z=\begin{bmatrix}1  &  0\\   0  &
-1\end{bmatrix}$    operators    or    their    combination    $H    =
\frac{X+Z}{\sqrt{2}}$,  the Hadamard  operator.  These operations  are
also synonymous with the corresponding single-qubit logic gates. Another
well known two-qubit operation is realized by the so called $C_{NOT}$ gate,
which together with suitable combinations of single-qubit gates can be
shown to be adequate for universal quantum computation \cite{Barenco}.

In this paper, we restrict our attention to weighted undirected graphs
of order  $2^n$ for  any natural  number $n.$  We introduce  new graph
theoretic operations or switching methods for weighted graphs by using
quantum logic  gates (for example,  Pauli matrices) on single qubits
which produce  cospectral and  signless Laplacian  cospectral weighted
undirected  graphs. By  (signless Laplacian)  cospectral we  imply two
graphs  with  equal  multi-set  of eigenvalues  of  the  corresponding
(signless Laplacian) adjacency matrices.

Given a graph $G$ of order  $2^n$ we generate new graphs $G^{U_k}$ by
applying      switching     methods      on     $G$      such     that
$\rho(G^{U_k})=U_k\rho(G)U_k^\dagger$ for some unitary matrix $U_k$ of
the form
\begin{equation}\label{LUO} U_k=U^{(1)} \otimes \hdots \otimes U^{(k-1)} \otimes U \otimes U^{(k+1)} \otimes \hdots \otimes U^{(n)}
\end{equation}
where  $U\in\{X,Y,Z, H\}$ and $U^{(j)}=I_2$ the identity matrix of order $2$ when $j\neq k$, and $k=1,2, \hdots, n.$

It is evident that the unitary matrix $U_k$ of order $2^n$ given in (\ref{LUO}) is a local unitary transformation acting on the Hilbert space ${\C^2}^{\otimes n}\equiv \mathcal{H}_1\otimes \mathcal{H}_2\otimes \hdots \otimes \mathcal{H}_n$ where $\mathcal{H}_j=\C^2$ for $j=1,2, \hdots, n.$ Thus, we show that local unitary operations defined in (\ref{LUO}) when applied on density matrix of a $n$-qubit quantum state $\rho(G)$ obtained by signless Laplacian matrix associated with a weighted graph $G$, can be realized by suitable graph switchings of the graph $G$. We call the switching equivalent graphs $G$ and $G^{U_k}$ the local unitary equivalent graphs.

The plan of  the paper is as  follows. In Section 2  we briefly review
the connection between  the concepts of graphs,  quantum states, graph
switching  and  local unitary  operations.  Next,  we introduce  graph
switching methods  which can realize local  unitary operations applied
to a density matrix associated  with graphs. Thus we provide switching
methods  to  generate  cospectral and  signless  Laplacian  cospectral
weighted undirected  graphs.  $CNOT$ gates  are two
qubit gates that, along with the  single qubit Hadamard gate, are used
to generate the two qubit  maximally entangled Bell states. In Section
$5$  we  show how  the  graph  switching  techniques  can be  used  to
implement  the action  of a  $CNOT$ gate,  resulting in  the graphical
description of Bell state generation. We then make our conclusions.

\section{Some Preliminaries}

A graph $G = (V(G), E(G))$ is a combination of vertex set $V(G)$ and edge set $E(G)$. $G$ is weighted if there is a weight function $w: E(G) \rightarrow \mathbb{R}$, $w(i,j) = w_{i,j}$. Adjacency matrix of $G$ is $A(G) = (a_{i,j})$, where,
$$a_{i,j} = \begin{cases} w_{i,j} &\text{if}~ (i,j) \in E(G)\\ 0 &\text{if}~ (i,j) \notin E(G)\end{cases}.$$
Adjacency matrix represents connections between different vertices in a graph. Degree of vertex $i$ is $d_i = \sum_j|a_{i,j}|$, and is indicative of the role played by the vertex $i$ in the graph $G$. Degree matrix of $G$ is $D(G) = diag\{d_i: i = 1, 2, \dots\}$.
Originally, combinatorial Laplacian $\mathcal{L} = D(G) - A(G)$ was used for representing density matrices corresponding to quantum states \cite{sibasish,wu}. 

In \cite{us}, density matrices were constructed using signless and signed Laplacians. Here, our constructions are made using the signless Laplacian. In \cite{Cvetkovic}, it was shown that the signless Laplacian is most convenient for use in the study of the properties of graphs compared to any other matrix associated with a graph (generalized adjacency matrices). In \cite{BeauGioSevWil}, the authors have obtained an ensemble of density matrices as the normalized version of the signless Laplacian matrices which arise from a uniform mixture of unsigned edge-states in a simple graph.

For a weighted undirected graph $G$ without multiple edges, with/without loops having real edge weights and nonnegative loop weights, the density matrix associated with $G=(V(G), E(G))$ is given by
\begin{equation}
\rho(G)=\frac{1}{\tr(L(G))}L(G), \, L(G)=D(G)+A(G)
\end{equation}
The matrix $L(G)$ of order $|V(G)|$ is called the signless Laplacian matrix associated with $G$ \cite{us}. To normalize the state associated with $L(G)$, we divide it by its trace. If G has no loop, then trace(D(G)) = trace(L(G)). But here, trace(D(G)) $\neq$ trace(L(G)) as graphs with weighted loops are considered. Hence, $\rho(G)$ is a Hermitian positive semi-definite matrix of trace $1$ and hence represents the density matrix of a quantum state in $\C^{|V(G)|}.$ The graph $G$ is called a graph representation of $\rho(G)$. As an example consider $\rho_0 = \ket{0}\bra{0} = \begin{bmatrix}1 &  0\\ 0 & 0\end{bmatrix}$ and $\rho_1 = \ket{1}\bra{1} = \begin{bmatrix}0 & 0 \\ 0 & 1\end{bmatrix}$. Graphs corresponding to $\rho_0$ and $\rho_1$ are
$$\xymatrix{\bullet_0\ar@(ul,dl)_{\frac{1}{2}} & \bullet_1},$$
and
$$\xymatrix{\bullet_0 & \bullet_1\ar@(ur,dr)^{\frac{1}{2}}},$$
respectively. These graphs do not have any edge but have a loop of weight $\frac{1}{2}$.

The following observations can be proved by using arguments similar to those in \cite{us}.
\begin{itemize}
\item The density matrix defined by an undirected weighted graph $G$ without loops of order $n$  represents a pure state if and only if $G$ is isomorphic to $\widehat{K}_2=K_2\sqcup i_1 \sqcup \hdots \sqcup i_{n-2}$, where $K_2$ is the complete graph of order $2$ (a graph with two vertices and one weighted edge).

\item The density matrix defined by an undirected weighted graph $G$ with nonnegative weighted loops of order $n$  represents a pure state if and only if $G$ is isomorphic to $\widehat{O}_1=O_1 \sqcup i_1 \sqcup i_2 \hdots \sqcup i_{n-1}$, where $O_1$ denotes a graph having one node with self-loop.
\end{itemize}

As an example consider the graph in the figure below representing a well known two qubit pure quantum state $\frac{1}{\sqrt{2}}(\ket{01} + \ket{10})$ \cite{us}, the Bell state.
$$\xymatrix{\bullet_0 & \bullet_1\ar@{-}[r]^{1} & \bullet_2 & \bullet_3}.$$
A single qubit mixed state $a\ket{0}\bra{0} + b\ket{1}\bra{1}, a + b = 1$ can be represented by
$$\xymatrix{\bullet_0\ar@(ul,dl)_{\frac{a}{2}} & \bullet_1 \ar@(ur,dr)^{\frac{b}{2}}}.$$

Thus, the density matrix which represents an $n$-qubit multipartite quantum state defined by signless Laplacian matrix associated with an undirected weighted graph $G$ of order $2^n$ with/without loops is a pure state if and only if $G$ contains only one edge. Otherwise, the density matrix $\rho(G)$ represents a mixed state.

Switching of a weighted graph $G$ is a technique to generate a new weighted graph $H$ from $G$ keeping the vertex set fixed. Thus, switching a graph $G=(V(G), E(G))$ means constructing a graph $H=(V(H), E(H))$ such that
\begin{itemize}
\item $V(H)=V(G)$.
\item $E(H)$ is given by $E(G)$ after removing/adding some weighted edges and/or altering weights of the edges in $G.$
\end{itemize}

The graphs $G$ and $H$ are called switching equivalent graphs. The switching method was first proposed in \cite{Seidel} for simple graphs that can produce cospectral simple graphs. Such a switching is well known as Seidel switching in the literature. Recently, some switching methods have been proposed in literature for directed, undirected and weighted graphs to generate cospectral, combinatorial Laplacian cospectral and signless Laplacian cospectral graphs \cite{Butler1}, \cite{Butler2}, respectively.

Familiar single qubit gates are $X, Y, Z$ and $H$. They work on single qubit quantum states $\ket{0}, \ket{1}$ and there linear combinations. Graphically we can represent their actions on $\rho_0$ and $\rho_1$ as:

\begin{enumerate}
\item
{\bf $X$ gate:}
Note that $X\rho_0X^{\dagger} = \rho_1$ and $X\rho_1 X^{\dagger} = \rho_0$. Thus applying $X$ gate on $\rho_0$ is equivalent to removing loop of weight $\frac{1}{2}$ from node $0$ and adding a loop of weight $\frac{1}{2}$ at node $1$. Corresponding action will be observed for $\rho_1$. The graph theoretical representations of these actions are illustrated below.
$$\xymatrix{\bullet_0\ar@(ul,dl)_{\frac{1}{2}} & \bullet_1} \xrightarrow{X} \xymatrix{\bullet_0 & \bullet_1\ar@(ur,dr)^{\frac{1}{2}}},$$
and
$$\xymatrix{\bullet_0 & \bullet_1\ar@(ur,dr)^{\frac{1}{2}}} \xrightarrow{X} \xymatrix{\bullet_0\ar@(ul,dl)_{\frac{1}{2}} & \bullet_1}.$$

\item
{\bf $Y$ gate:}
As $Y\rho_0Y^{\dagger} = \rho_1$ and $Y\rho_1Y^{\dagger} = \rho_0$ and its graphical representation is similar to that of $X$ above.

\item
{\bf $Z$ gate:}
Here $Z\rho_0Z^{\dagger} = \rho_0$ and $Z \rho_1Z^{\dagger} = \rho_1$. Graphically its action are given by
$$\xymatrix{\bullet_0\ar@(ul,dl)_{\frac{1}{2}} & \bullet_1} \xrightarrow{Z} \xymatrix{\bullet_0\ar@(ul,dl)_{\frac{1}{2}} & \bullet_1},$$
and
$$\xymatrix{\bullet_0 & \bullet_1\ar@(ur,dr)^{\frac{1}{2}}} \xrightarrow{Z} \xymatrix{\bullet_0 & \bullet_1\ar@(ur,dr)^{\frac{1}{2}}}.$$

\item
{\bf $H$ gate:}
$H\rho_0H^{\dagger} = \frac{1}{2}\begin{bmatrix} 1 & 1 \\ 1 & 1\end{bmatrix}$, which is the density matrix corresponding to a graph with two vertices. $H\rho_1H^{\dagger} = \frac{1}{2}\begin{bmatrix} 1 & -1 \\ -1 & 1\end{bmatrix}$, corresponding to a graph with two vertices with an edge of weight -$1$. Thus applying $H$ gate on $\rho_0$ or $\rho_1$ is tantamount to removing loops from the nodes and adding an edge between the two nodes. This can be represented graphically as:
$$\xymatrix{\bullet_0\ar@(ul,dl)_{\frac{1}{2}} & \bullet_1} \xrightarrow{H} \xymatrix{\bullet_0 \ar@{-}[r]^1 & \bullet_1},$$
and
$$\xymatrix{\bullet_0 & \bullet_1\ar@(ur,dr)^{\frac{1}{2}}} \xrightarrow{H} \xymatrix{\bullet_0 \ar@{-}[r]^{-1}& \bullet_1}.$$

\item
{\bf $CNOT$ gate:}
$CNOT$ is a two qubit gate whose matrix representation is denoted here by $C_{NOT}$.
$$C_{NOT} = \begin{bmatrix}I_2 & 0 \\ 0 & X \end{bmatrix} = \begin{bmatrix} 1 & 0 & 0 & 0 \\ 0 & 1 & 0 & 0 \\ 0 & 0 & 0 & 1 \\ 0 & 0 & 1 & 0 \end{bmatrix}.$$
$CNOT$ gate is useful for generating Bell states. Its graphical action on two qubit states is depicted at the end of section $5$,
\end{enumerate}

An $n$-qubit quantum state can be represented by a graph of order $2^n$. As an example a $3$-qubit quantum state $\frac{1}{\sqrt{2}}(\ket{001} + \ket{110})$ can be represented as
$$\xymatrix{\bullet_0 & \bullet_1\ar@/^/[rrrrr]^1 & \bullet_2 & \bullet_3 & \bullet_4 & \bullet_5 & \bullet_6\ar@/_/[lllll] & \bullet_7}$$
Another, well known, example of a $3$-qubit quantum state $\ket{W} = \frac{1}{\sqrt{3}}(\ket{100} + \ket{010} + \ket{001})$ can be represented \cite{us} as
$$\xymatrix{\bullet_2  \ar@{-}[r]^1 \ar@{-}[d]_1 \ar@(ul,dl)_{-1}& \bullet_4 \ar@(lu,ru)^{-1} & \bullet_1 & \bullet_3 & \bullet_6 & \bullet_7 & \bullet_8 \\ \bullet_5 \ar@{-}[ur]_1 \ar@(ld,rd)_{-1} & & & & & &}$$
This graph corresponds to a state represented by a signed Laplacian, but here we do not make further use of this.

Action of local unitary operations on multi-partite systems have found
place  in the  core  of  quantum information  due  to applications  in
several   quantum   information   protocols.  For   example,   quantum
teleportation and quantum dense coding are based on the equivalence of
some classes of states of bi-partite systems \cite{Bennett}. Since the
entries of a density matrix representing a multi-partite quantum state
depend on the choice of the basis in Hilbert space associated with the
system,  action of  a local  unitary operation  implies change  of the
basis in the subsystems. Thus,  a local unitary operation only changes
our point of view without affecting the physical system.

Consider an $n$-qubit system given by the Hilbert space ${\C^2}^{\otimes n}.$ Then two $n$-qubit states given by the density matrices $\rho_1$ and $\rho_2$ are said to be local unitary equivalent if there exists a unitary matrix $U$ of order $2^n$ for which $\rho_2=U\rho_1U^\dagger$ where $U$ is of the form $U=U_1\otimes U_2 \otimes \hdots \otimes U_n, U_i$ is a unitary matrix of order $2$ for $i=1,2, \hdots, n.$

In section 3, we develop switching methods for weighted undirected graphs such that for the two switching equivalent graphs $G$ and $G^{U_k}$, $A(G^{U_k}) = U_kA(G)U_k^\dagger$ and
$$\rho(G^{U_k}) = U_k\rho(G)U_k^\dagger, \, U_k=U^{(1)} \otimes \hdots \otimes U^{(k-1)} \otimes U \otimes U^{(k+1)} \otimes \hdots \otimes U^{(n)},$$ where,  $U\in\{X,Y,Z, H\}$ and $U^{(j)}=I_2$ the identity matrix of order $2$ when $j\neq k$, and $k=1,2, \hdots, n$. Thus, the switching methods proposed in this paper preserve both the spectra and signless Laplacian spectra. Thereby, we introduce the concept of local unitary equivalent graphs which are useful for the realization of local unitary operations on $n$-qubit quantum states, represented by weighted undirected graphs of order $2^n.$

\section{Switching methods and local unitary equivalent $n$-qubit quantum states}

In order to interpret switching equivalent weighted undirected graphs as local unitary equivalent $n$-qubit quantum states which are represented by graphs, we consider a block representation of the adjacency matrix associated with such a graph. Thus, given a weighted undirected graph $G=(V(G), E(G))$ of order $2^n,$ we consider a partition of the vertex set $V(G)=\{0, 1, \hdots, 2^n -1\}$ given by $$V(G)=\sqcup_{j=0}^{2^{n-1} -1} C_j$$ where $C_j=\{2j, 2j+1\}, j=0,1, \hdots,  2^{n-1}-1.$ Then the the adjacency matrix of $G$ is given by
\begin{equation}
\begin{split}
A(G) & = \begin{bmatrix}
C_{0,0} & C_{0,1} & \dots & C_{0,(2^{n-1} - 1)}\\
C_{1,0} & C_{1,1} & \dots & C_{1,(2^{n-1} - 1)}\\
\vdots & \vdots & \vdots & \vdots\\
C_{(2^{n-1} - 1),0} & C_{(2^{n-1} - 1),1} & \dots & C_{(2^{n-1} - 1),(2^{n-1} - 1)}
\end{bmatrix}_{2^n \times 2^n}\\
\text{where} \hspace{1mm} C_{i,j} & = \begin{bmatrix} w(2i,2j)|_G & w(2i,2j+1)|_G \\ w(2i+1,2j)|_G & w(2i+1,2j+1)|_G\end{bmatrix}_{2 \times 2}\\
\text{and} \hspace{1mm} C_{i,i}& = \begin{bmatrix} w(2i,2i)|_G & w(2i, 2i+1)|_G \\ w(2i+1,2i)|_G & w(2i+1,2i+1)|_G \end{bmatrix}_{2 \times 2}.\\
\end{split}
\end{equation}

Obviously, $C_{i,j} = C_{j,i}.$ Note that $C_{i,i}$ is the adjacency matrix corresponding to a subgraph induced by the vertex subset $C_i$ and $C_{i,j}$ provides the information of the existence of edges between $C_i$ and $C_j$ for all $i, j$.

First we propose switching methods which can realize the local unitary operations given by $U_n=\underbrace{I_2\otimes I_2\otimes \hdots \otimes I_2}_{(n-1) \, \mbox{times}}\otimes U$ where $U = \begin{bmatrix} e^{i\phi_1}\cos(\theta) & e^{i\phi_2}\sin(\theta) \\ -e^{-i\phi_2}\sin(\theta) & e^{-i\phi_1}\cos(\theta) \end{bmatrix}.$

\begin{pro}
Given an weighted undirected graph $G=(V(G), E(G))$ we propose the following rule to generate a new graph $G^{U_n}=(V(G), E(G^{U_n}))$. Each step of this procedure will be applied on different edges or loops present in the original graph one by one. To generate the resultant graph if more than one weights are assigned during the process on a single edge or loop, ultimately, all the assigned weights will be added up cumulatively to produce the resultant weight for each edge or loop.
\begin{enumerate}
	\item
	Assign the following edge weights inside every module $C_i$ when $E(G)$ contains loops in it.
		\begin{enumerate}
		\item
			Let $(2i,2i) \in E(G)$, then
			\begin{enumerate}
				\item
					$(2i,2i) \in E(G^{U_n}), \\ w(2i,2i)|_{G^{U_n}} = e^{2i\phi_1}\cos^2(\theta)w(2i,2i)|_G$
				\item
					$(2i,2i + 1) \in E(G^{U_n}), \\ w(2i,2i + 1)|_{G^{U_n}} = e^{i(\phi_1 + \phi_2)}\cos(\theta)\sin(\theta)w(2i,2i)|_G$
				\item
					$(2i+ 1,2i) \in E(G^{U_n}), \\ w(2i + 1,2i)|_{G^{U_n}} = -e^{i(\phi_1 - \phi_2)}\sin(\theta)\cos(\theta)w(2i,2i)|_G$
				\item
					$(2i + 1,2i + 1) \in E(G^{U_n}), \\ w(2i + 1,2i + 1)|_{G^{U_n}} = -\sin^2(\theta)w(2i,2i)|_G$
			\end{enumerate}
		\item
		Let $(2i + 1,2i + 1) \in E(G)$, then
			\begin{enumerate}
				\item
					$(2i,2i) \in E(G^{U_n}), \\ w(2i,2i)|_{G^{U_n}} = -\sin^2(\theta)w(2i + 1,2i + 1)|_G$
				\item
					$(2i,2i + 1) \in E(G^{U_n}), \\ w(2i,2i + 1)|_{G^{U_n}} = e^{-i(\phi_1 - \phi_2)}\cos(\theta)\sin(\theta)w(2i + 1,2i + 1)|_G$
				\item
					$(2i+ 1,2i) \in E(G^{U_n}), \\ w(2i + 1,2i)|_{G^{U_n}} = -e^{-i(\phi_1 + \phi_2)}\sin(\theta)\cos(\theta)w(2i + 1,2i + 1)|_G$
				\item
					$(2i + 1,2i + 1) \in E(G^{U_n}), \\ w(2i + 1,2i + 1)|_{G^{U_n}} = e^{-2i\phi_1}\cos^2(\theta)w(2i + 1,2i + 1)|_G$
			\end{enumerate}
		\end{enumerate}
	\item
	Assign the following edge weights inside a module when there is an edge inside a module.
		\begin{enumerate}
		\item
		Let $(2i,2i + 1) \in E(G)$, then
			\begin{enumerate}
				\item
					$(2i,2i) \in E(G^{U_n}), \\ w(2i,2i)|_{G^{U_n}} = -e^{i(\phi_1 - \phi_2)}\cos(\theta)\sin(\theta)w(2i,2i+1)|_G$
				\item
					$(2i,2i + 1) \in E(G^{U_n}), \\ w(2i,2i + 1)|_{G^{U_n}} = \cos^2(\theta)w(2i,2i+1)|_G$
				\item
					$(2i+ 1,2i) \in E(G^{U_n}), \\ w(2i + 1,2i)|_{G^{U_n}} = -e^{-2i\phi_2}\sin^2(\theta)w(2i,2i+1)|_G$
				\item
					$(2i + 1,2i + 1) \in E(G^{U_n}), \\ w(2i + 1,2i + 1)|_{G^{U_n}} = -e^{-i(\phi_1 + \phi_2)}\sin(\theta)	\cos(\theta)w(2i,2i+1)|_G$
			\end{enumerate}
		\item
		Let $(2i + 1,2i) \in E(G)$, then
			\begin{enumerate}
				\item
					$(2i,2i) \in E(G^{U_n}), \\ w(2i,2i)|_{G^{U_n}} = e^{i(\phi_1 + \phi_2)}\cos(\theta)\sin(\theta)w(2i+1,2i)|_G$
				\item
					$(2i,2i + 1) \in E(G^{U_n}), \\ w(2i,2i + 1)|_{G^{U_n}} = e^{2i\phi_2}\sin^2(\theta)w(2i+1,2i)|_G$
				\item
					$(2i+ 1,2i) \in E(G^{U_n}), \\ w(2i + 1,2i)|_{G^{U_n}} = \cos^2(\theta)w(2i+1,2i)|_G$
				\item
					$(2i + 1,2i + 1) \in E(G^{U_n}), \\ w(2i + 1,2i + 1)|_{G^{U_n}} = e^{-i(\phi_1 - \phi_2)}\sin(\theta)\cos(\theta)w(2i+1,2i)|_G$
			\end{enumerate}
		\end{enumerate}
\item
Assign these edge weights when there are edges joining vertices of different modules.
\begin{enumerate}
\item
			Let $(2i,2j) \in E(G)$, then
			\begin{enumerate}
				\item
					$(2i,2j) \in E(G^{U_n}), \\ w(2i,2j)|_{G^{U_n}} = e^{2i\phi_1}\cos^2(\theta)w(2i,2j)|_G$
				\item
					$(2i,2j + 1) \in E(G^{U_n}), \\ w(2i,2j + 1)|_{G^{U_n}} = e^{i(\phi_1 + \phi_2)}\cos(\theta)\sin(\theta)w(2i,2j)|_G$
				\item
					$(2i+ 1,2j) \in E(G^{U_n}), \\ w(2i + 1,2j)|_{G^{U_n}} = -e^{i(\phi_1 - \phi_2)}\sin(\theta)\cos(\theta)w(2i,2j)|_G$
				\item
					$(2i + 1,2j + 1) \in E(G^{U_n}), \\ w(2i + 1,2j + 1)|_{G^{U_n}} = -\sin^2(\theta)w(2i,2j)|_G$
			\end{enumerate}
\item
		Let $(2i + 1,2j + 1) \in E(G)$, then
			\begin{enumerate}
				\item
					$(2i,2j) \in E(G^{U_n}), \\ w(2i,2j)|_{G^{U_n}} = -\sin^2(\theta)w(2i + 1,2j + 1)|_G$
				\item
					$(2i,2j + 1) \in E(G^{U_n}), \\ w(2i,2j + 1)|_{G^{U_n}} = e^{-i(\phi_1 - \phi_2)}\cos(\theta)\sin(\theta)w(2i + 1,2j + 1)|_G$
				\item
					$(2i+ 1,2j) \in E(G^{U_n}), \\ w(2i + 1,2j)|_{G^{U_n}} = -e^{-i(\phi_1 + \phi_2)}\sin(\theta)\cos(\theta)w(2i + 1,2j + 1)|_G$
				\item
					$(2i + 1,2j + 1) \in E(G^{U_n}), \\ w(2i + 1,2j + 1)|_{G^{U_n}} = e^{-2i\phi_1}\cos^2(\theta)w(2i + 1,2j + 1)|_G$	
			\end{enumerate}
\item
		Let $(2i,2j + 1) \in E(G)$, then
			\begin{enumerate}
				\item
					$(2i,2j) \in E(G^{U_n}), \\ w(2i,2j)|_{G^{U_n}} = -e^{i(\phi_1-\phi_2)}\cos(\theta)\sin(\theta)w(2i,2j + 1)|_G$
				\item
					$(2i,2j + 1) \in E(G^{U_n}), \\ w(2i,2j + 1)|_{G^{U_n}} = \cos^2(\theta)w(2i,2j + 1)|_G$
				\item
					$(2i+ 1,2j) \in E(G^{U_n}), \\ w(2i + 1,2j)|_{G^{U_n}} = -e^{-2i\phi_2}\sin^2(\theta)w(2i,2j + 1)|_G$
				\item
					$(2i + 1,2j + 1) \in E(G^{U_n}), \\ w(2i + 1,2j + 1)|_{G^{U_n}} = -e^{-i(\phi_1 + \phi_2)}\sin(\theta)\cos(\theta)w(2i,2j + 1)|_G$
			\end{enumerate}
\item
		Let $(2i + 1,2j) \in E(G)$, then
			\begin{enumerate}
				\item
					$(2i,2j) \in E(G^{U_n}), \\ w(2i,2j)|_{G^{U_n}} = e^{i(\phi_1 + \phi_2)}\cos(\theta)\sin(\theta)w(2i + 1,2j)|_G$
				\item
					$(2i,2j + 1) \in E(G^{U_n}), \\ w(2i,2j + 1)|_{G^{U_n}} = e^{2i\phi_2}\sin^2(\theta)w(2i + 1,2j)|_G$
				\item
					$(2i+ 1,2j) \in E(G^{U_n}), \\ w(2i + 1,2j)|_{G^{U_n}} = \cos^2(\theta)w(2i + 1,2j)|_G$
				\item
					$(2i + 1,2j + 1) \in E(G^{U_n}),\\ w(2i + 1,2j + 1)|_{G^{U_n}} = e^{-i(\phi_1 - \phi_2)}\sin(\theta)\cos(\theta)w(2i + 1,2j)|_G$
			\end{enumerate}
\end{enumerate}
\end{enumerate}
\end{pro}

Then we have the following theorem.

\begin{thm}\label{proofz}
Let $G$ be a weighted undirected graph of order $2^n.$ Then $\rho(G^{U_n}) = U_n \rho(G) U_n^{\dagger}$ where $U_n=\underbrace{I_2\otimes I_2\otimes \hdots \otimes I_2}_{(n-1) \, \mbox{times}}\otimes U.$
\end{thm}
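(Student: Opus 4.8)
The plan is to reduce everything to a block computation by exploiting the tensor structure $U_n = I_{2^{n-1}}\otimes U$. Writing $A(G)=(C_{i,j})$ in the $2\times 2$ block form introduced above, the factorization $U_n = I_{2^{n-1}}\otimes U$ gives $(U_n A(G) U_n^\dagger)_{i,j} = U\,C_{i,j}\,U^\dagger$ for every pair of modules $C_i, C_j$. Hence the theorem splits into two parts: (i) that the Procedure produces precisely the graph whose adjacency matrix has blocks $U C_{i,j} U^\dagger$, i.e.\ $A(G^{U_n}) = U_n A(G) U_n^\dagger$; and (ii) that this adjacency-level identity upgrades to the signless Laplacian, $L(G^{U_n}) = U_n L(G) U_n^\dagger$, once the degree matrix has been recomputed.

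For (i) I would use the additivity built into the Procedure: since all weights deposited on a given edge or loop are summed cumulatively, it suffices by linearity to check the rule on a single nonzero entry of $A(G)$ at a time. Each such entry is a rank-one term $w\,e_p e_q^{\,T}$ lying in one block $C_{i,j}$, and its image $U(w e_p e_q^{\,T})U^\dagger$ is again supported on the four positions of that block. I would then match, for each $(p,q)\in\{(0,0),(0,1),(1,0),(1,1)\}$, the four entries of this $2\times 2$ product against the phase-trigonometric coefficients listed in the Procedure: cases 1(a) and 1(b) cover the loop positions $(2i,2i)$ and $(2i+1,2i+1)$ inside a module, cases 2(a) and 2(b) the two intra-module edge orientations, and cases 3(a)--(d) the four positions of an off-diagonal block $C_{i,j}$. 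Summing these contributions over all loops and edges of $G$ reassembles $A(G^{U_n}) = U_n A(G) U_n^\dagger$.

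Step (ii) is where I expect the real difficulty to lie. The degree matrix $D(G^{U_n})$ is \emph{recomputed} from the absolute values of the new weights and is therefore diagonal, whereas $U_n D(G) U_n^\dagger$ is only block-diagonal and is generally non-diagonal inside each module. The crux is thus to show that the loop weights created inside each module by cases 1 and 2 of the Procedure are exactly what is needed so that, module by module, the new diagonal degree block $D_i'$ together with the transformed block satisfy $D_i' + \big(A(G^{U_n})\big)_{i,i} = U(D_i + C_{i,i})U^\dagger$, where $D_i=\diag(d_{2i},d_{2i+1})$. Establishing this bookkeeping of loop-versus-degree contributions in every module is the main obstacle, and I would handle it by comparing the two sides entrywise within each $C_i$.

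Finally, granting $L(G^{U_n}) = U_n L(G) U_n^\dagger$, the normalization is automatic: unitarity of $U_n$ gives $\tr L(G^{U_n}) = \tr\big(U_n L(G) U_n^\dagger\big) = \tr L(G)$, so the two normalizing constants coincide and $\rho(G^{U_n}) = \frac{1}{\tr L(G^{U_n})}L(G^{U_n}) = \frac{1}{\tr L(G)}\,U_n L(G) U_n^\dagger = U_n \rho(G) U_n^\dagger$, which is the assertion.
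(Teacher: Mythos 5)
Your part (i) coincides with the paper's entire proof: the authors write $U_n=\diag\{U,\dots,U\}$, observe that each block $C_{i,j}$ is a sum of the four rank-one matrices $w\,e_pe_q^{T}$, and tabulate their conjugates, exactly as you propose. (One caveat if you carry this out honestly: the weights listed in the Procedure, and the four matrices displayed in the paper's proof, are the entries of $UMU$, \emph{not} $UMU^{\dagger}$ --- e.g.\ the stated $(1,1)$ entry $e^{2i\phi_1}w\cos^2\theta$ is not even real, whereas $UMU^{\dagger}$ with $M$ Hermitian must be Hermitian --- so the adjacency-level identity $A(G^{U_n})=U_nA(G)U_n^{\dagger}$ matches the Procedure only after correcting these phases.)

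The genuine gap is your step (ii), which you rightly call the main obstacle but never carry out --- and it cannot be carried out, because the identity you need is false. Once step (i) has fixed every block of $A(G^{U_n})$, including the diagonal blocks $A(G^{U_n})_{i,i}=UC_{i,i}U^{\dagger}$, there is no freedom left in the loop weights, so your module identity $D_i'+A(G^{U_n})_{i,i}=U(D_i+C_{i,i})U^{\dagger}$ is equivalent to $D_i'=UD_iU^{\dagger}$. The left side is diagonal by definition of a degree matrix, while the right side has off-diagonal entry $e^{i(\phi_1+\phi_2)}\cos\theta\sin\theta\,(d_{2i+1}-d_{2i})$, nonzero whenever the two vertices of a module have unequal degrees and $U$ is not a generalized permutation matrix. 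Concretely, take $n=1$, let $G$ be the single loop of weight $\frac12$ at vertex $0$ (so $\rho(G)=\ket{0}\bra{0}$), and $U=H$: then $A(G^{U_n})=HA(G)H=\frac14\begin{bmatrix}1&1\\1&1\end{bmatrix}$, hence $D(G^{U_n})=\frac12 I_2$ and $\rho(G^{U_n})=\begin{bmatrix}1/2&1/6\\1/6&1/2\end{bmatrix}$, a mixed state, whereas $H\rho(G)H^{\dagger}=\frac12\begin{bmatrix}1&1\\1&1\end{bmatrix}$ is pure, so no bookkeeping can reconcile them (using the paper's literal $UMU$ weights instead gives $\rho(G^{U_n})=\begin{bmatrix}1/2&-1/2\\-1/2&1/2\end{bmatrix}$, still wrong). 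The theorem thus survives only for the monomial gates $X,Y,Z$, where conjugation permutes entries up to sign and hence commutes with recomputing absolute-value degrees, giving $D(G^{U_n})=U_nD(G)U_n^{\dagger}$ automatically; it fails for $H$ and for generic $(\theta,\phi_1,\phi_2)$. Tellingly, the paper's own Bell-state construction in Section 5 does not use the graph its Procedure would produce for $H_2$ (which would carry loops at vertices $0$ and $1$) but writes down by hand the loopless single-edge graph representing $H\ket{1}\bra{1}H$. The published proof conceals all of this by never mentioning $D(G)$; your proposal has the merit of exposing the fatal step, but the loop-versus-degree cancellation you hope will close it does not exist.
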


\begin{proof} Note that,
\begin{align*}
{U_n} & = \diag\{U, U, \dots U\}_{2^n \times 2^n} \,\, \mbox{and}\\
U_nA(G)U_n^{\dagger} & = \begin{bmatrix}
UC_0U^{\dagger} & UC_{0,1}U^{\dagger} & \dots & UC_{0,(2^{n-1} - 1)}U^{\dagger}\\
UC_{1,0}U^{\dagger} & UC_1U^{\dagger} & \dots & UC_{1,(2^{n-1} - 1)}U^{\dagger}\\
\vdots & \vdots & \vdots & \vdots\\
UC_{(2^{n-1} - 1),0}U^{\dagger} & UC_{(2^{n-1} - 1),1}U^{\dagger} & \dots & UC_{(2^{n-1} - 1)}U^{\dagger}
\end{bmatrix}.\\
\end{align*}
Here, $C_i$ and $C_{i,j}$ are either the matrices given below or their sum.
$$\begin{bmatrix} w & 0 \\ 0 & 0\end{bmatrix}, \begin{bmatrix} 0 & 0 \\ 0 & w\end{bmatrix}, \begin{bmatrix} 0 & w \\ 0 & 0\end{bmatrix}, \begin{bmatrix} 0 & 0 \\ w & 0\end{bmatrix}.$$
Remaining proof follows from the fact that
\begin{align*}
U\begin{bmatrix} w & 0 \\ 0 & 0\end{bmatrix}U^{\dagger} & =
\begin{bmatrix}
e^{2 i.\phi_1} w \cos[\theta]^2 & e^{i.\phi_1+i.\phi_2} w \cos[\theta] \sin[\theta ] \\
-e^{i.\phi_1-i.\phi_2} w \cos[\theta] \sin[\theta] & -w \sin[\theta ]^2 \\
\end{bmatrix},\\
U\begin{bmatrix} 0 & 0 \\ 0 & w\end{bmatrix}U^{\dagger} & =
\begin{bmatrix}
-w \sin[\theta ]^2 & e^{-i.\phi_1+i.\phi_2} w \cos[\theta ] \sin[\theta ] \\
-e^{-i.\phi_1-i.\phi_2} w \cos[\theta ] \sin[\theta ] & e^{-2 i.\phi_1} w \cos[\theta ]^2
\end{bmatrix},\\
U\begin{bmatrix} 0 & w \\ 0 & 0\end{bmatrix}U^{\dagger} & =
\begin{bmatrix}
-e^{i.\phi_1-i.\phi_2} w \cos[\theta] \sin[\theta ] & w \cos[\theta ]^2 \\
e^{-2 i.\phi_2} w \sin[\theta ]^2 & -e^{-i.\phi_1-i.\phi_2} w \cos[\theta ]\sin[\theta ]
\end{bmatrix},\\
U\begin{bmatrix} 0 & 0 \\ w & 0\end{bmatrix}U^{\dagger} & =
\begin{bmatrix}
e^{i.\phi_1+i.\phi_2} w \cos[\theta] \sin[\theta] & e^{2 i.\phi_2} w \sin[\theta]^2 \\
w \cos[\theta]^2 & e^{-i.\phi_1+i.\phi_2} w \cos[\theta] \sin[\theta]
 \end{bmatrix}.
\end{align*}
Hence proved.
\end{proof}

Here $U$ is the most general unitary matrix of order $2$, as given above. The well known single qubit matrices can be obtained from it as special cases. For $\theta = \frac{\pi}{2}, \phi_2 = \frac{\pi}{2}, U = iX$; for $\theta = \frac{\pi}{2}, \phi_2 = \frac{3\pi}{2}, U = Y$; for $\theta = 0, \phi_1 = \frac{\pi}{2}, U = iZ$ and for $\theta = \frac{\pi}{4}, \phi_1 = \frac{\pi}{2}, \phi_2 = \frac{\pi}{2}, U = iH$. The complex number $i$ acts as a phase factor having no role in unitary operations. Thus, for example, given a weighted undirected graph $G=(V(G), E(G))$, a new graph $G^{X_n}=(V(G), E(G^{X_n}))$, corresponding to $X_n=\underbrace{I_2\otimes I_2\otimes \hdots \otimes I_2}_{(n-1) \, \mbox{times}}\otimes X$, would be generated according to the following rules.

\begin{enumerate}

\item
Interchange loop weights inside every module $C_i$
\begin{equation} \label{x1}
\begin{split}
(2i, 2i) \in E(G) \Rightarrow & (2i + 1, 2i + 1) \in E(G^{X_n}), (2i,2i) \notin E(G^{X_n})\\
& w(2i + 1, 2i + 1)|_{G^{X_n}} = w(2i, 2i)|_G,\\
(2i + 1, 2i + 1) \in E(G) \Rightarrow & (2i, 2i) \in E(G^{X_n}), (2i + 1, 2i + 1) \notin E(G^{X_n})\\
& w(2i, 2i)|_G = w(2i + 1, 2i + 1)|_{G^{X_n}}.
\end{split}
\end{equation}

\item
Keep edge weights inside modules unchanged
\begin{equation}\label{x2}
\begin{split}
(2i,2i+1) \in E(G) \Rightarrow & (2i,2i+1) \in E(G^{X_n}) \\
& w(2i, 2i + 1)|_{G^{X_n}} = w(2i, 2i + 1)|_G.
\end{split}
\end{equation}

\item
Modifications of edges joining two modules
\begin{equation}\label{x3}
\begin{split}
(2i , 2j) \in E(G) \Rightarrow & (2i + 1, 2j + 1) \in E(G^{X_n}), (2i , 2j) \notin E(G^{X_n}) \\
& w(2i + 1, 2j + 1)|_{G^{X_n}} =  w(2i, 2j)|_G,\\
(2i + 1 , 2j + 1) \in E(G) \Rightarrow & (2i, 2j) \in E(G^{X_n}), (2i + 1, 2j + 1) \notin E(G^{X_n}) \\
& w(2i, 2j)|_{G^{X_n}} = w(2i + 1, 2j + 1)|_G,\\
(2i , 2j + 1) \in E(G) \Rightarrow & (2j, 2i + 1) \in E(G^{X_n}), (2i , 2j + 1) \notin E(G^{X_n}) \\
& w(2j, 2i + 1)|_{G^{X_n}} = w(2i , 2j + 1)|_G,\\
(2j , 2i + 1) \in E(G) \Rightarrow & (2i, 2j + 1) \in E(G^{X_n}), (2j , 2i + 1) \notin E(G^{X_n})\\
& w(2i, 2j + 1)|_{G^{X_n}} = w(2j , 2i + 1)|_G.
\end{split}
\end{equation}
\end{enumerate}

The following example provides an overview of the effect of local unitary operators $X_n, Y_n, Z_n, H_n$ when applied to a $2$-qubit density matrix of order $4$, making use of the graph representations of the corresponding density matrix.

\begin{ex}
Consider the graph,
$$\xymatrix{{\bullet}_{0} \ar@{-}[r]^{w_{01}} \ar@(ul,dl)_{w_{00}} & {\bullet}_{1} \ar@{-}[d]^{w_{13}} \ar@{-}[ld]_{w_{12}} \\ {\bullet}_{2} & {\bullet}_{3} \ar@(ur,dr)^{w_{33}}}.$$
It represents following mixed two qubit state,
$$\rho = \frac{L(G)}{tr(L(G))} = \frac{1}{tr(L(G))}\begin{bmatrix} 2w_{00} + w_{01} & w_{01} & 0 & 0\\ w_{01} & w_{01} + w_{12} + w_{13} & w_{12} & w_{13} \\ 0 & w_{12} & w_{12} & 0 \\ 0 & w_{13} & 0 & w_{13} + 2w_{33}\end{bmatrix},$$ where $tr(L(G)) = 2(w_{00} + w_{01} + w_{12} + w_{13} + w_{33})$.
Applying $X_2$ on the original graph:
$$\xymatrix{{\bullet}_{0} \ar@{-}[r]^{w_{01}} \ar@{-}[rd]^{w_{12}} \ar@{-}[d]_{w_{13}} & {\bullet}_{1} \ar@(ur,dr)^{w_{00}} \\ {\bullet}_{2} \ar@(ul,dl)_{w_{33}} & {\bullet}_{3}}$$
Applying $Y_2$ on the original graph:
$$\xymatrix{{\bullet}_{0} \ar@{-}[r]^{-w_{01}} \ar@{-}[rd]^{-w_{12}} \ar@{-}[d]_{w_{13}}& {\bullet}_{1} \ar@(ur,dr)^{w_{00}} \\ {\bullet}_{2} \ar@(ul,dl)_{w_{33}} & {\bullet}_{3}}$$
Applying $Z_2$ on the original graph:
$$\xymatrix{{\bullet}_{0} \ar@{-}[r]^{-w_{01}} \ar@(ul,dl)_{w_{00}} & {\bullet}_{1} \ar@{-}[d]^{w_{13}} \ar@{-}[ld]_{-w_{12}}\\ {\bullet}_{2} & {\bullet}_{3} \ar@(ur,dr)^{w_{33}}}$$
Applying $H_2$ on the original graph:
$$\xymatrix{{\bullet}_{0} \ar@{-}[r] \ar@{-}[d] \ar@{-}[rd] \ar@(ul,dl)& {\bullet}_{1} \ar@(ur,dr) \ar@{-}[d] \ar@{-}[ld] \\ {\bullet}_{2} \ar@(ul,dl) \ar@{-}[r] & {\bullet}_{3}\ar@(ur,dr)}$$
In this figure, the edge weights $w(i,j)$ where $(i,j) = 0, 1, 2, 3$ are
\begin{equation*}
\begin{split}
w(0,1)&= w_{00} - \frac{1}{2}(w_{12} + w_{13})\\
w(0,2) & = \frac{1}{2}w_{12} + \frac{1}{2}w_{13},\\
w(0,3) & = \frac{1}{2}w_{12} - \frac{1}{2}w_{13},\\
w(1,2) & = -\frac{1}{2}w_{12} - \frac{1}{2}w_{13},\\
w(1,3) & = -\frac{1}{2}w_{12} + \frac{1}{2}w_{13},\\
w(2,3) & = \frac{1}{2}w_{12} - \frac{1}{2}w_{13} - w_{33},\\
w(0,0) & = w_{01} + \frac{w_{13}}{2},\\
w(1,1) & = w_{12} + \frac{w_{13}}{2},\\
w(2,2) & = w_{33} + \frac{w_{13}}{2},\\
w(3,3) & = w_{33} + \frac{w_{13}}{2}.
\end{split}
\end{equation*}
\end{ex}

Thus the local unitary operators given by $X_n, Y_n, Z_n$ and $H_n$ acting on a density matrix $\rho(G)$ of an $n$-qubit quantum state defined by the signless Laplacian associated with a weighted undirected $G$ can be realized by graph switchings applied on the graph $G.$ Now we focus on the local unitary operators $U_k=I_2\otimes \hdots \otimes I_{2}\otimes U\otimes I_2 \otimes \hdots \otimes I_2$ when $k< n$ and $U\in\{X, Y, Z, H\}$ is placed in the $k$-th position of the tensor product.

In order to give a graph theoretic interpretation of $U_k$ we represent the vertex set of a weighted undirected graph $G=(V(G), E(G))$ of order $2^n$ by $V(G)=\{0,1\}^n\equiv \{0,1, \hdots, 2^n-1\}$ where a vertex $j \in V(G)$ is represented by a sequence of $0,1$s. The labeling of the vertices of $G$ is determined by the lexicographic ordering defined on $\{0, 1\}^n.$ For example, if $n=2$ the labeled vertex set is given by $V(G)=\{00, 01, 10, 11\}\equiv \{0, 1, 2, 3\}.$

We also consider a permutation $p_{k,n}$ on the vertex set $V(G)$ defined as follows \begin{eqnarray} p_{k,n} &:& \{0,1\}^n \rightarrow \{0,1\}^n \\ && x_1x_2\hdots x_{k-1}x_kx_{k+1}\hdots x_n \mapsto x_1x_2\hdots x_{k-1}x_nx_{k+1}\hdots x_k\end{eqnarray} where $x_i\in \{0, 1\}.$ Thus given the standard lexicographic ordering on $\{0,1\}^n$, $p_{k,n}$ introduces a relabeling of the vertices. Let $P_{k,n}$ be the unique permutation matrix associated with $p_{k,n}.$ Then it is easy to verify that \begin{equation}\label{uk1} A(G_{p_{k,n}})=P_{k,n}A(G)P_{k,n}^\dagger, \,\,\, D(G_{p_{k,n}})=P_{k,n}D(G)P_{k,n}^\dagger \end{equation} where $G_{p_{k,n}}$ denotes the graph $G$ with a new labeling of the vertices given by $p_{k,n}.$ Moreover, $U_k=P_{k,n}U_nP_{k,n}.$

This yields \begin{equation}\label{uk2}U_k\rho(G)U_k^\dagger = (P_{k,n}U_nP_{k,n})\rho(G)(P_{k,n}U_nP_{k,n})^\dagger =P_{k,n} (U_n(P_{k,n}\rho(G)P_{k,n}^\dagger)U_n^\dagger)P_{k,n}^\dagger.\end{equation}

Therefore, the local unitary operation $U_k$ on an $n$-qubit density matrix $\rho(G)$ defined by a graph $G$ of order $2^n$ can be explained by the following switching procedure \begin{equation}\label{uk} G \mapsto^{p_{k,n}} G_{p_{k,n}} \mapsto^{U_n} G_{p_{k,n}}^{U_n} \mapsto^{p_{k,n}} G^{U_k}\end{equation} where $U_n=I_2\otimes \hdots \otimes I_2 \hdots \otimes U$ and $U\in \{X, Y, Z, H\}.$ Then we have the following theorem.

\begin{thm}\label{proofh}
Let $G$ be a weighted undirected graph of order $2^n.$ Then $\rho(G^{U_k}) = U_k \rho(G) U_k^{\dagger}$ where $U_k=I_1\otimes \hdots \otimes I_{2}\otimes U\otimes I_2 \otimes \hdots \otimes I_2,$ $k< n$ and $U\in\{X, Y, Z, H\}$ placed in the $k$-th position of the tensor product.
\end{thm}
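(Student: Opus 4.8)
The plan is to reduce the general position $k<n$ to the already-settled case $k=n$ (Theorem~\ref{proofz}) by conjugating with the permutation matrix $P_{k,n}$, following the three-step switching chain (\ref{uk}) verbatim. The definition $G^{U_k} := (G_{p_{k,n}}^{U_n})_{p_{k,n}}$ encodes exactly this: relabel the vertices by $p_{k,n}$, apply the Procedure realizing $U_n$, then relabel back. The whole argument will rest on two observations about $P_{k,n}$: first, since $p_{k,n}$ merely transposes the $k$-th and $n$-th bits of every string, it is an involution, so $P_{k,n}^2 = I$ and $P_{k,n}=P_{k,n}^\top=P_{k,n}^\dagger$; second, the stated commutation identity $U_k = P_{k,n}U_nP_{k,n}$, which says that swapping the $k$-th and $n$-th tensor slots carries the single-qubit factor $U$ from slot $n$ to slot $k$.

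First I would record how relabeling acts on the density matrix. From (\ref{uk1}) we have $A(G_{p_{k,n}})=P_{k,n}A(G)P_{k,n}^\dagger$ and $D(G_{p_{k,n}})=P_{k,n}D(G)P_{k,n}^\dagger$, hence $L(G_{p_{k,n}})=P_{k,n}L(G)P_{k,n}^\dagger$; since conjugation by a permutation matrix preserves the trace, dividing by $\tr(L)$ gives
\begin{equation*}
\rho(G_{p_{k,n}})=P_{k,n}\,\rho(G)\,P_{k,n}^\dagger .
\end{equation*}
This identity holds for every weighted undirected graph of order $2^n$, so it applies equally to the intermediate graph appearing in the middle of the chain.

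Next I would traverse the chain (\ref{uk}). The first relabeling gives $\rho(G_{p_{k,n}})=P_{k,n}\rho(G)P_{k,n}^\dagger$. Applying Theorem~\ref{proofz} to $G_{p_{k,n}}$ yields $\rho(G_{p_{k,n}}^{U_n})=U_n\rho(G_{p_{k,n}})U_n^\dagger$. The final relabeling, together with the density-matrix transformation law above, gives
\begin{equation*}
\rho(G^{U_k})=P_{k,n}\,\rho(G_{p_{k,n}}^{U_n})\,P_{k,n}^\dagger
=P_{k,n}\big(U_n(P_{k,n}\,\rho(G)\,P_{k,n}^\dagger)U_n^\dagger\big)P_{k,n}^\dagger .
\end{equation*}
By (\ref{uk2}) the right-hand side equals $U_k\,\rho(G)\,U_k^\dagger$, which finishes the proof.

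The computation itself is pure bookkeeping in permutation conjugation; the genuine content is the identity $U_k=P_{k,n}U_nP_{k,n}$ underlying (\ref{uk2}). I expect that to be the only step needing real care: verifying it amounts to checking that conjugating the block-diagonal $U_n=\diag\{U,\dots,U\}$ by the transposition $P_{k,n}$ reproduces the tensor factorization with $U$ in the $k$-th slot, using $P_{k,n}^\dagger=P_{k,n}$. Everything else follows mechanically from (\ref{uk1}), Theorem~\ref{proofz}, and trace-invariance under permutation similarity.
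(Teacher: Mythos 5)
Your proposal is correct and follows essentially the same route as the paper's own proof: reduce to Theorem~\ref{proofz} by conjugating with the permutation matrix $P_{k,n}$, using the relabeling identities (\ref{uk1}), the identity $U_k = P_{k,n}U_nP_{k,n}$ behind (\ref{uk2}), the switching chain (\ref{uk}), and trace invariance under unitary (in particular permutation) conjugation. The only cosmetic difference is that you carry out the bookkeeping at the level of the normalized density matrices throughout, while the paper works with $L(G)=D(G)+A(G)$ and restores the trace normalization at the end.
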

\begin{proof}
From (\ref{uk1}), (\ref{uk2}) and (\ref{uk}) we get the following equation:
\begin{align*}
U_k \rho(G) U_k^{\dagger} & = U_k \frac{1}{\tr(L(G))}L(G) U_k^{\dagger}\\
& = \frac{1}{\tr(L(G))}U_k [D(G) + A(G)]U_k^{\dagger}\\
& = \frac{1}{\tr(L(G))}P_{k,n}U_nP_{k,n}[D(G) + A(G)](P_{k,n}U_nP_{k,n})^{\dagger}\\
& = \frac{1}{\tr(L(G))}[D(G^{U_k}) + A(G^{U_k})].\\
\end{align*}
$P_{k,n}U_n$ is a unitary matrix. Thus $\tr(L(G)) = \tr(L(G^k))$. Replacing it in the above equation we get,
$$U_k \rho(G) U_k^{\dagger} = \frac{1}{\tr(L(G^k))}[D(G^{U_k}) + A(G^{U_k})] = \frac{1}{\tr(L(G^k))}L(G^k) = \rho(G^{U_k}).$$
Hence proved.
\end{proof}

At the end of this section we would like to propose a simplified version of the graph theoretic interpretation of the operations $X_n, Y_n$ and $Z_n$, discussed above. First, partition $V(G)=B\sqcup R$ into halves with colored vertices, say blue and red vertices, where the blue colored vertices are in $B=\{b_1, b_2, \dots b_{2^{n-1}}\}$ and the red colored vertices are in $R=\{r_1, r_2, \dots r_{2^{n-1}}\}.$ Define two vertices of different colors say $b_i$ and $r_j$ as conjugate to each other if $i = j$. A loop $(b_i,b_i)$ or $(r_i,r_i)$ is an edge joining vertices of same color, that is, $b_i$ and $b_i$ or $r_i$ and $r_i$. To illustrate all the transformations discussed above, we consider an example of a graph $G$ of order $4$, giving a $2$-qubit state with edge weights $a, b, c, d$ and $e$, as shown in the figure below.
$$\xymatrix{\bullet \ar@(ul,dl)_{a} \ar@{-}[r]^{b}& \bullet \ar@{-}[dl]_{c} \ar@{-}[d]_{d} \\ \bullet & \bullet \ar@(ur,dr)^{e}} \equiv \xymatrix{\textcolor{blue}{\bullet_{b_1}} \ar@(ul,dl)_{a} \ar@{-}[r]^{b}& \textcolor{red}{\bullet_{r_1}} \ar@{-}[dl]_{c} \ar@{-}[d]_{d} \\ \textcolor{blue}{\bullet_{b_2}} & \textcolor{red}{\bullet_{r_2}} \ar@(ur,dr)^{e}}$$

Following changes in $E(G)$ switch $G$ into $G^{X_n}$:
\begin{enumerate}
\item
Given an edge joining vertices of same color (suppose $r_i$ and $r_j$) in $G$, $E(G^{X_n})$ has a member joining corresponding conjugate vertices ($b_i$ and $b_j$) with same edge weight.
\item
Given an edge joining vertices of different colors (say $(b_i, r_j)$ in $G$), $E(G^{X_n})$ has a member joining corresponding conjugate vertices $(b_j, r_i)$ with same edge weight.
\end{enumerate}
No change is required for edges $(b_i, r_i)$. For the particular example of the graph shown above, $G^{X_n}$ is
$$\xymatrix{\textcolor{blue}{\bullet_{b_1}} \ar@{-}[dr]^{c} \ar@{-}[d]_{d} \ar@{-}[r]^{b}& \textcolor{red}{\bullet_{r_1}} \ar@(ur,dr)^{a} \\ \textcolor{blue}{\bullet_{b_2}} \ar@(ul,dl)_{e} & \textcolor{red}{\bullet_{r_2}}}$$

Switching $G$ to $G^{Y_n}$ is equivalent to switching $G$ to $G^{X_n}$ with an additional operation:
\begin{enumerate}
\item
If there is an edge joining vertices of different colors in $G$, the new edge weight is equal to $-1$ times the old edge weight.
\end{enumerate}
Thus, $G^{Y_n}$ is
$$\xymatrix{\textcolor{blue}{\bullet_{b_1}} \ar@{-}[dr]_{-c} \ar@{-}[d]_{d} \ar@{-}[r]^{-b}& \textcolor{red}{\bullet_{r_1}} \ar@(ur,dr)^{a} \\ \textcolor{blue}{\bullet_{b_2}} \ar@(ul,dl)_{e} & \textcolor{red}{\bullet_{r_2}}}$$

Switching $G$ to $G^{Z_n}$ is simplest as it does not change the location of any edge except the edge weight joining vertices of different colors. The new weight is $-1$ times the old edge weight. Hence, $G^{Z_n}$ for the graph $G$, introduced above, is
$$\xymatrix{\textcolor{blue}{\bullet_{b_1}} \ar@(ul,dl)_{a} \ar@{-}[r]^{-b}& \textcolor{red}{\bullet_{r_1}} \ar@{-}[dl]_{-c} \ar@{-}[d]_{d} \\ \textcolor{blue}{\bullet_{b_2}} & \textcolor{red}{\bullet_{r_2}} \ar@(ur,dr)^{e}}$$

It is easy to check that $\rho(G^{U_n}) = U_n\rho(G)U_n^{\dagger}$ for $U = X, Y$ and $Z$ in the above example.

\section{CNOT gate and Bell state generation}
$CNOT$ gates occupy a central position in various quantum information processing tasks. They along with the Hadamard gate can be used for generating the two qubit maximally entangled Bell states from separable states. Here we adapt the switching techniques developed above to the $CNOT$ gate, thereby providing a graphical method of Bell states generation.

Consider super-modules on $V(G)$ as a partition $\{\mathcal{C}_i: i = 0, 1, \dots 2^{n-2}-1\}$, where $$\mathcal{C}_i = C_{2i} \cup C_{2i + 1} = \{4i, 4i + 1\} \cup\{4i + 2, 4i + 3\} = \{4i, 4i + 1, 4i + 2, 4i + 3\}.$$
From a graph $G$ with $|V(G)| = 2^n, n \geq 2$ we construct a graph $G^{C_{NOT}}(V(G^{C_{NOT}}), E^(G^{C_{NOT}}))$, s.t. $\rho(G^{C_{NOT}}) = C_{NOT_n}\rho(G)C_{NOT_n}$, where, $C_{NOT_n} = I \otimes I \otimes \dots \otimes C_{NOT}$. Clearly $V(G^{C_{NOT}}) = V(G)$.

\begin{pro}
{\bf Construct $E(G^{C_{NOT}})$ from $E(G)$ by graph switching}
\begin{enumerate}

\item
Changes inside a supermodule $\mathcal{C}_i: i = 0,1,\dots(2^{n-2}-1)$ shall be as follows.
\begin{enumerate}

\item
No changes in the loops and edges inside module $C_{2i}$
\begin{equation}
\begin{split}
(4i,4i) \in E(G) \Rightarrow & (4i,4i) \in E(G^{C_{NOT}})\\
& w(4i,4i)|_{G^{C_{NOT}}} = w(4i,4i)|_{G},\\
(4i + 1,4i + 1) \in E(G) \Rightarrow & (4i + 1,4i + 1) \in E(G^{C_{NOT}})\\
& w(4i + 1,4i + 1)|_{G^{C_{NOT}}} = w(4i + 1,4i + 1)|_{G},\\
(4i,4i + 1) \in E(G) \Rightarrow & (4i,4i + 1) \in E(G^{C_{NOT}})\\
& w(4i,4i + 1)|_{G^{C_{NOT}}} = w(4i,4i + 1)|_{G}.
\end{split}
\end{equation}

\item
Do the following changes for the edges inside module $C_{2i + 1}$.
\begin{equation}
\begin{split}
(4i + 2,4i + 2) \in E(G) \Rightarrow & (4i + 3,4i + 3) \in E(G^{C_{NOT}}), (4i + 2,4i + 2) \notin E(G^{C_{NOT}})\\ & w(4i + 3,4i + 3)|_{G^{C_{NOT}}} = w(4i + 2,4i + 2)|_{G},\\
(4i + 3,4i + 3) \in E(G) \Rightarrow & (4i + 2,4i + 2) \in E(G^{C_{NOT}}), (4i + 3,4i + 3) \notin E(G^{C_{NOT}})\\ & w(4i + 2,4i + 2)|_{G^{C_{NOT}}} = w(4i + 3,4i + 3)|_{G},\\
(4i + 2,4i + 3) \in E(G) \Rightarrow & (4i + 2,4i + 3) \in E(G^{C_{NOT}})\\
& w(4i + 2,4i + 3)|_{G^{C_{NOT}}} = w(4i + 2,4i + 3)|_{G}.\\
\end{split}
\end{equation}

\item
Change the edges joining modules $C_{2i}$ and $C_{2i + 1}$ of supermodule $\mathcal{C}_i$.
\begin{equation}
\begin{split}
(4i,4i + 2) \in E(G) \Rightarrow & (4i,4i + 3) \in E(G^{C_{NOT}}), (4i,4i + 2) \notin E(G^{C_{NOT}})\\
& w(4i,4i + 3)|_{G^{C_{NOT}}} = w(4i,4i + 2)|_{G},\\
(4i,4i + 3) \in E(G) \Rightarrow & (4i,4i + 2) \in E(G^{C_{NOT}}), (4i,4i + 3) \notin E(G^{C_{NOT}})\\
& w(4i,4i + 2)|_{G^{C_{NOT}}} = w(4i,4i + 3)|_{G},\\
(4i + 1,4i + 2) \in E(G) \Rightarrow & (4i + 1,4i + 3) \in E(G^{C_{NOT}}), (4i + 1,4i + 2) \notin E(G^{C_{NOT}})\\ & w(4i + 1,4i + 3)|_{G^{C_{NOT}}} = w(4i + 1,4i + 2)|_{G},\\
(4i + 1,4i + 3) \in E(G) \Rightarrow & (4i + 1,4i + 2) \in E(G^{C_{NOT}}), (4i + 1,4i + 3) \notin E(G^{C_{NOT}})\\ & w(4i + 1,4i + 2)|_{G^{C_{NOT}}} = w(4i + 1,4i + 3)|_{G}.\\
\end{split}
\end{equation}
\end{enumerate}

\item
Changes in the edges joining different supermodules $\mathcal{C}_i$ and $\mathcal{C}_j$ for $i,j = 0, 1,\dots, (2^{n-2}-1); i\neq j$ shall be as follows.
\begin{enumerate}

\item
No changes in the edges joining $C_{2i} \subset \mathcal{C}_i$ and $C_{2j} \subset \mathcal{C}_j$.
\begin{equation}
\begin{split}
(4i,4j) \in E(G) \Rightarrow & (4i,4j) \in E(G^{C_{NOT}})\\
& w(4i,4j)|_{G^{C_{NOT}}} = w(4i,4j)|_{G},\\
(4i + 1,4j + 1) \in E(G) \Rightarrow & (4i + 1,4j + 1) \in E(G^{C_{NOT}})\\
& w(4i + 1,4j + 1)|_{G^{C_{NOT}}} = w(4i + 1,4j + 1)|_{G},\\
(4i,4j + 1) \in E(G) \Rightarrow & (4i,4j + 1) \in E(G^{C_{NOT}})\\
& w(4i,4j + 1)|_{G^{C_{NOT}}} = w(4i,4j + 1)|_{G},\\
(4i + 1,4j) \in E(G) \Rightarrow & (4i + 1,4j) \in E(G^{C_{NOT}})\\
& w(4i + 1,4j)|_{G^{C_{NOT}}} = w(4i + 1,4j)|_{G}.\\
\end{split}
\end{equation}

\item
Change the edges joining $C_{2i} \subset \mathcal{C}_i$ and $C_{2j + 1} \subset \mathcal{C}_j$ according to:
\begin{equation}
\begin{split}
(4i,4j + 2) \in E(G) \Rightarrow & (4i,4j + 3) \in E(G^{C_{NOT}}), (4i,4j + 2) \notin E(G^{C_{NOT}})\\
& w(4i,4j + 3)|_{G^{C_{NOT}}} = w(4i,4j + 2)|_{G},\\
(4i,4j + 3) \in E(G) \Rightarrow & (4i,4j + 2) \in E(G^{C_{NOT}}), (4i,4j + 3) \notin E(G^{C_{NOT}})\\
& w(4i,4j + 2)|_{G^{C_{NOT}}} = w(4i,4j + 3)|_{G},\\
(4i + 1,4j + 2) \in E(G) \Rightarrow & (4i + 1,4j + 3) \in E(G^{C_{NOT}}), (4i + 1,4j + 2) \notin E(G^{C_{NOT}})\\ & w(4i + 1,4j + 3)|_{G^{C_{NOT}}} = w(4i + 1,4j + 2)|_{G},\\
(4i + 1,4j + 3) \in E(G) \Rightarrow & (4i + 1,4j + 2) \in E(G^{C_{NOT}}), (4i + 1,4j + 3) \notin E(G^{C_{NOT}})\\ & w(4i + 1,4j + 2)|_{G^{C_{NOT}}} = w(4i + 1,4j + 3)|_{G}.
\end{split}
\end{equation}

\item
For the edges joining $C_{2i + 1} \subset \mathcal{C}_i$ and $C_{2j} \subset \mathcal{C}_j$, the following changes need to be made:
\begin{equation}
\begin{split}
(4i + 2,4j) \in E(G) \Rightarrow & (4i + 3,4j) \in E(G^{C_{NOT}}), (4i + 2,4j) \notin E(G^{C_{NOT}})\\
& w(4i + 3,4j)|_{G^{C_{NOT}}} = w(4i + 2,4j)|_{G},\\
(4i + 3,4j) \in E(G) \Rightarrow & (4i + 2,4j) \in E(G^{C_{NOT}}), (4i + 3,4j) \notin E(G^{C_{NOT}})\\
& w(4i + 2,4j)|_{G^{C_{NOT}}} = w(4i + 3,4j)|_{G},\\
(4i + 2,4j + 1) \in E(G) \Rightarrow & (4i + 3,4j + 1) \in E(G^{C_{NOT}}), (4i + 3,4j + 1) \notin E(G^{C_{NOT}})\\ & w(4i + 3,4j + 1)|_{G^{C_{NOT}}} = w(4i + 3,4j + 1)|_{G},\\
(4i + 3,4j + 1) \in E(G) \Rightarrow & (4i + 2,4j + 1) \in E(G^{C_{NOT}}), (4i + 3,4j + 1) \notin E(G^{C_{NOT}})\\ & w(4i + 2,4j + 1)|_{G^{C_{NOT}}} = w(4i + 3,4j + 1)|_{G}.
\end{split}
\end{equation}

\item
Change the edges joining modules $C_{2i + 1} \in \mathcal{C}_i$ and $C_{2j + 1} \in \mathcal{C}_j$.
\begin{equation}
\begin{split}
(4i + 2,4j + 4) \in E(G) \Rightarrow & (4i + 3,4j + 3) \in E(G^{C_{NOT}}), (4i + 2,4j + 4) \notin E(G^{C_{NOT}})\\& w(4i + 3,4j + 3)|_{G^{C_{NOT}}} = w(4i + 2,4j + 4)|_{G},\\
(4i + 3,4j + 3) \in E(G) \Rightarrow & (4i + 2,4j + 2) \in E(G^{C_{NOT}}), (4i + 3,4j + 3) \notin E(G^{C_{NOT}})\\& w(4i + 2,4j + 2)|_{G^{C_{NOT}}} = w(4i + 3,4j + 3)|_{G},\\
(4i + 2,4j + 3) \in E(G) \Rightarrow & (4i + 2,4j + 3) \in E(G^{C_{NOT}})\\
& w(4i + 2,4j + 3)|_{G^{C_{NOT}}} = w(4i + 2,4j + 3)|_{G},\\
(4i + 3,4j + 2) \in E(G) \Rightarrow & (4i + 3,4j + 2) \in E(G^{C_{NOT}})\\
& w(4i + 3,4j + 2)|_{G^{C_{NOT}}} = w(4i + 3,4j + 2)|_{G}.\\
\end{split}
\end{equation}
\end{enumerate}
\end{enumerate}
\end{pro}

\begin{thm}
$\rho(G^{C_{NOT}}) = {C_{NOT}}_n\rho(G){C_{NOT}}_n$.
\end{thm}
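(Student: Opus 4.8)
The plan is to mirror the proof of Theorem~\ref{proofz}, exploiting that $C_{NOT}$ is not an arbitrary unitary but a real symmetric \emph{permutation} matrix. First I would record the three properties $C_{NOT}^T = C_{NOT}$ and $C_{NOT}^2 = I_4$, so that $C_{NOT}$ is its own adjoint and its own inverse; hence ${C_{NOT}}_n = I_{2^{n-2}}\otimes C_{NOT}$ is symmetric, unitary and involutory, which is exactly why the statement carries no dagger. As a permutation, $C_{NOT}$ realizes the transposition $(2\ 3)$ of the local indices $0,1,2,3$ inside a supermodule, so ${C_{NOT}}_n$ is precisely the permutation matrix $P_\sigma$ of the vertex relabeling $\sigma$ that swaps $4i+2 \leftrightarrow 4i+3$ in every supermodule $\mathcal{C}_i$ while fixing $4i$ and $4i+1$.

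Next I would reduce the claim to the level of the signless Laplacian. Since ${C_{NOT}}_n$ is unitary, $\tr({C_{NOT}}_n L(G) {C_{NOT}}_n) = \tr(L(G))$, so it suffices to establish the unnormalized identity
\begin{equation*}
L(G^{C_{NOT}}) = {C_{NOT}}_n\, L(G)\, {C_{NOT}}_n ,
\end{equation*}
from which $\rho(G^{C_{NOT}}) = \frac{1}{\tr(L(G^{C_{NOT}}))}L(G^{C_{NOT}}) = {C_{NOT}}_n\rho(G){C_{NOT}}_n$ follows at once. Writing $L = D + A$, the adjacency part $A(G^{C_{NOT}}) = {C_{NOT}}_n A(G) {C_{NOT}}_n = P_\sigma A(G) P_\sigma^T$ is a pure relabeling of $A(G)$ carrying no sign or weight change (indeed every rule in the Procedure copies $w|_G$ verbatim, with no minus signs, reflecting that $C_{NOT}$ is a $0/1$ matrix). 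Consequently the degree $d_v(G^{C_{NOT}}) = \sum_w |A(G)_{\sigma(v),\sigma(w)}| = d_{\sigma(v)}(G)$ is merely the permuted degree, giving $D(G^{C_{NOT}}) = {C_{NOT}}_n D(G) {C_{NOT}}_n$ for free. Thus the entire theorem collapses to verifying $A(G^{C_{NOT}}) = {C_{NOT}}_n A(G) {C_{NOT}}_n$, i.e.\ that the switching rules reproduce the conjugation.

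To confirm this last identity I would use the supermodule block decomposition, under which ${C_{NOT}}_n = \diag\{C_{NOT},\dots,C_{NOT}\}$, so the $(i,j)$ block of ${C_{NOT}}_n A(G) {C_{NOT}}_n$ equals $C_{NOT}\,A_{i,j}\,C_{NOT}$, where $A_{i,j}$ is the $4\times 4$ block of $A(G)$ indexed by $\mathcal{C}_i,\mathcal{C}_j$. Because conjugation by $(2\ 3)$ sends the single-entry matrix $E_{a,b}$ to $E_{\sigma(a),\sigma(b)}$, it is enough to evaluate $C_{NOT}\,E_{a,b}\,C_{NOT}$ on each admissible entry and match it against the Procedure: the diagonal blocks are covered by items~1(a)--(c) (local positions $0,1$ fixed, $2,3$ swapped, intra-supermodule edges relabeled accordingly), and the off-diagonal blocks by items~2(a)--(d) according to whether each endpoint lies in $C_{2i}$ (fixed) or $C_{2i+1}$ (swapped). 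The main obstacle is therefore not conceptual but the completeness of this bookkeeping: I must check that every loop/edge type in a $4\times 4$ block appears exactly once in the list and is sent to the position dictated by $\sigma$ on \emph{both} endpoints, after correcting the evident index misprint in item~2(d) (the $(4i+2,4j+4)$ should read $(4i+2,4j+2)$, consistent with the stated image $(4i+3,4j+3)$). Unlike the general unitary case of Theorem~\ref{proofz}, no weight-splitting or cumulative summation arises here, since $\sigma$ is a bijection and each edge maps to a single edge, so the case analysis closes cleanly.
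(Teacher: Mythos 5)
Your proof is correct, and it is organized around a different key idea than the paper's. The paper's own proof is a bare block-matrix computation: it writes ${C_{NOT}}_n=\diag\{I,X,I,X,\dots,I,X\}$ with respect to the $2\times 2$ modules $C_j$, displays the conjugated adjacency matrix whose blocks are $M_i C_{i,j} M_j$ with $M_i\in\{I,X\}$ alternating, regroups these into the $4\times 4$ supermodule blocks $C_{NOT}\,\mathcal{C}_{i,j}\,C_{NOT}$ --- and then stops: it never matches these products against the rules of the Procedure, never treats the degree matrix $D(G)$, and never handles the trace normalization. Your organizing idea --- that $C_{NOT}$ is a symmetric, involutory permutation matrix, so that ${C_{NOT}}_n$ is exactly the relabeling $\sigma$ swapping $4i+2\leftrightarrow 4i+3$, and conjugation sends $E_{a,b}$ to $E_{\sigma(a),\sigma(b)}$ --- is absent from the paper, and it is precisely what makes the omitted steps automatic: degrees are merely permuted, so $D(G^{C_{NOT}})={C_{NOT}}_n D(G){C_{NOT}}_n$ comes for free, the trace is invariant under conjugation, and no cumulative weight-summation can arise since each edge maps to a single edge. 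What the paper's route buys is uniformity with Theorem~\ref{proofz} (the same module-level template that works for arbitrary unitaries); what yours buys is a genuinely complete argument, including the degree-matrix and normalization steps that the general template glosses over. Your reading of item 2(d) as a misprint is also correct --- the source $(4i+2,4j+4)$ must be $(4i+2,4j+2)$ to be consistent with the image $(4i+3,4j+3)$ --- and the same holds for item 2(c), where the exclusion should read $(4i+2,4j+1)\notin E(G^{C_{NOT}})$ and the weight equation should read $w(4i+3,4j+1)|_{G^{C_{NOT}}}=w(4i+2,4j+1)|_G$; in every case the intended rule is exactly $E_{a,b}\mapsto E_{\sigma(a),\sigma(b)}$.
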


\begin{proof}
Proof follows from these block matrix multiplications.
\begin{equation*}
\begin{split}
{C_{NOT}}_n & = I \otimes I \otimes \dots \otimes C_{NOT} = \diag\{I, X, I, X \dots I, X\}\\
{C_{NOT}}_nA(G){C_{NOT}}_n & = \begin{bmatrix}
IC_0I & IC_{0,1}X & IC_{0,2}I & IC_{0,3}X & \dots \\
XC_{1,0}I & XC_1X & XC_{1,2}I & XC_{1,3}X & \dots \\
IC_{2,0}I & IC_{2,1}X & IC_2I & IC_{2,3}X & \dots \\
XC_{3,0}I & XC_{3,1}X & XC_{3,2}I & XC_3X & \dots \\
\vdots & \vdots & \vdots & \vdots & \vdots \\
\end{bmatrix}\\
& = \begin{bmatrix}
C_{NOT}\mathcal{C}_0C_{NOT} & C_{NOT}\mathcal{C}_{0,1}C_{NOT} & \dots \\
C_{NOT}\mathcal{C}_{1,0}C_{NOT} & C_{NOT}\mathcal{C}_{1}C_{NOT} & \dots \\
\vdots & \vdots & \dots \\
\end{bmatrix}\\
\end{split}
\end{equation*}
\end{proof}

We now use graph switching techniques to depict the action of Hadamard and CNOT gates to generate Bell states from two qubit separable states. The structure of Bell states was shown earlier in \cite{us}.

We begin with initial state $\ket{10}$. We operate a Hadamard gate on the first qubit followed by a CNOT gate to generate Bell state as follows,
$$\ket{10} \xrightarrow{H_1} \frac{1}{\sqrt{2}}(\ket{00} - \ket{10}) \xrightarrow{CNOT} \frac{1}{\sqrt{2}}(\ket{00} - \ket{11}).$$
Graph, corresponding to state $\ket{10}\bra{10}$, with vertex decomposition $\mathcal{C} = C_0 \cup C_1 = \{0, 1\} \cup \{2, 3\}$, is
$$\xymatrix{{\bullet}_0 & {\bullet}_1 & {\bullet}_2 \ar@(ul,ur)[]^{\frac{1}{2}} & {\bullet}_3}.$$

To apply Hadamard gate on first qubit, i.e., $H_1$, we first swap vertices. The graph changes to
$$\xymatrix{{\bullet}_0 & {\bullet}_2 & {\bullet}_1 \ar@(ul,ur)[]^{\frac{1}{2}} & {\bullet}_3} \equiv \xymatrix{{\bullet}_0 & {\bullet}_1 \ar@(ul,ur)[]^{\frac{1}{2}} & {\bullet}_2 & {\bullet}_3}$$

Apply $H_2$ and get a new graph
$$\xymatrix{{\bullet}_0 \ar@{-}[r]^{-1}& {\bullet}_1 & {\bullet}_2 & {\bullet}_3}$$

To finish $H_1$ we swap it again. Graph after completing Hadamard operation is
$$\xymatrix{{\bullet}_0 \ar@/^/[rr]^{-1} & {\bullet}_1 & {\bullet}_2 \ar@/_/[ll] & {\bullet}_3}.$$

Now apply CNOT operation. Following the procedure applied above, the new graph represents the state $\frac{1}{\sqrt{2}}(\ket{00} - \ket{11})$ \cite{us}.
$$\xymatrix{{\bullet}_0 \ar@/^/[rrr]^{-1} & {\bullet}_1 & {\bullet}_2 & {\bullet}_3\ar@/_/[lll]}$$
Similarly all other Bell states can be generated graph theoretically.

\section{Conclusion}

In  this work  we  establish  a proof  of  principle for  representing
quantum   states  and   local  unitaries   graph  theoretically.    In
particular,  quantum states  are described  by the  signless Laplacian
matrix of their graph representation. We  work out in detail the graph
switching  operations  that  correspond  to some important local  unitaries  on  $n$
qubits.

While this has obvious significance in quantum information processing,
we     think    that     this     may have    impact on foundational
issues as well. Essentially, by representing  quantum superposition as a
weighted edge,  (to give  a dramatic  slant) our  approach geometrizes
non-realism, and herein lies its  appeal.  The full scope of this
approach   and  its   generalization  to   the  hypergraph   formalism
\cite{SPSS} will be discussed in future works.

 This  work is,  hopefully, a stepping  stone towards  a graph
theoretical understanding of issues in quantum foundations and quantum
information.


\begin{thebibliography}{20}
\bibitem{us}
B. Adhikari, S. Adhikari, S. Banerjee and A. Kumar, {\it Laplacian matrices of weighted digraphs and graph representation of quantum states}, submitted 2015.

\bibitem{EPR}
Albert Einstein, Boris Podolsky and Nathan Rosen, {\it Can quantum-mechanical description of physical reality be considered complete?}, Physical review 47, no.10(1935): P 777.

\bibitem{Bell}
John S. Bell, {\it On the Einstein-Podolsky-Rosen paradox}, Physics 1, no. 3(1964), P 195-200.

\bibitem{CHSH}
J. F. Clauser and A. Shimony, {\it Bell's theorem. Experimental tests and implications}, Reports on Progress in Physics 41, no.12 (1978), P 1881.

\bibitem{BB84}
C. H. Bennett and G. Brassard, {\it Quantum cryptography: Public key distribution and coin tossing}, International Conference on Computers, systems and Signal Processing 1 (1984), P 175 - 179.

\bibitem{Bennett}
Charles H. Bennett {\it et. al}, {\it Teleporting an unknown quantum state via dual classical and Einstein-Podolsky-Rosen channels}, Physical review letters 70, no. 13(1993), P 1895.

\bibitem{Wootters}
W. K. Wootters, {\it Entanglement of formation of an arbitrary state of two qubits}, Physical Review Letters 80, no. 10(1998), P 2245.

\bibitem{Shor}
P. Shore, {\it Algorithms for quantum computation: Discrete logarithms and factoring} in Proc. 35th Annual Symp. om Foundation of Computer Science, (1994) Santa Fe, IEEE Computer Society Press, P 124-134.

\bibitem{Grover}
Lov K. Grover, {\it Quantum mechanics helps in searching for a needle in a haystack}, Physical review letters 79, no. 2(1997), P 325.

\bibitem{NC}
M. A. Nielsen and I. L. Chuang, {\it Quantum Computation and Quantum Information}, (Cambridge University Press, N. Dellhi (2008)).

\bibitem{Haroche}
M. Brune, {\it et.al.}, {\it Observing the progressive decoherence of the “meter” in a quantum measurement}, Physical Review Letters 77, no.24(1996), P 4887.

\bibitem{Wineland}
Q. A. Turchette, {\it et al.}, {\it Decoherence and decay of motional quantum states of a trapped atom coupled to engineered reservoirs}, Physical Review A 62, no. 5(2000), P 053807.

\bibitem{Barenco}
A. Barenco, {\it et al.}, {\it Elementary gates for quantum computation}, Physical Review A 52, no. 5(1995), P 3457.

\bibitem{sibasish}
Braunstein, Samuel L., Sibasish Ghosh, and Simone Severini. "The Laplacian of a graph as a density matrix: a basic combinatorial approach to separability of mixed states." Annals of Combinatorics {\bf 10}, (2006): 291-317.

\bibitem{wu}
Wu, Chai Wah. "Conditions for separability in generalized Laplacian matrices and diagonally dominant matrices as density matrices." Physics Letters A {\bf 351}, (2006): 18-22.

\bibitem{Cvetkovic}
Cvetkovic, D. "Signless Laplacians And Line Graphs." Bulletin T.CXXXI de l’Academie serbe des sciences et des arts − 2005 Classe des Sciences mathematiques et naturelles Sciences mathematiques, No 30.

\bibitem{BeauGioSevWil}
N. de Beaudrap, V. Giovannetti, S. Severini, and R. Wilson, {\it Interpreting the von Neumann entropy of graph Laplacians, and coentropic graphs}, arXiv:1304.7946. 2013.

\bibitem{Seidel}
J. J. Seidel, Graphs and two-graphs, In Proceedings of the Fifth Southeastern Conference on Combinatorics, Graph Theory and Computing (Florida Atlantic Univ., Boca Raton, Fla., 1974), page 125, Congressus Numerantium, No. X. Utilitas Math., Winnipeg,
Man., 1974.

\bibitem{Butler1}
Butler, Steve, and Jason Grout. "A construction of cospectral graphs for the normalized Laplacian." The Electronic Journal of Combinatorics 18.1 (2011): P231.

\bibitem{Butler2}
Butler, Steve. "A note about cospectral graphs for the adjacency and normalized Laplacian matrices." Linear and Multilinear Algebra 58.3 (2010): 387-390.

\bibitem{SPSS}
Sudhir K. Singh,  S. P. Pal, Somesh Kumar and R. Srikanth, "A   combinatorial  approach   for  studying LOCC transformations of multipartite states", Journal of Mathematical Physics 46, 122105 (2005).

\end{thebibliography}
\end{document}